\newtheorem{thm}{Theorem}[section]
\newtheorem{defi}[thm]{Definition}
\newtheorem{prop}[thm]{Proposition}
\theoremstyle{definition}
\newtheorem{rem}[thm]{Remark}
\newtheorem{example}[thm]{Example}
\newcommand{\E}{\mathbb{E}}
\def\d{{\rm d}}
\newcommand{\M}{\mathbb{M}}
\newcommand{\Sg}{\mathbb{S}}
\newcommand{\ind}{\mathds{1}}
\begin{document}

\title{A $K$-function for inhomogeneous random measures with
    geometric features}

  \author[1]{Anne Marie Svane}

\author[2]{Hans Jacob Teglbj{\ae}rg Stephensen}

\author[1]{Rasmus Waagepetersen}

\affil[1]{Department of Mathematical Sciences, Aalborg University, Denmark}
\affil[2]{Department of Computer Science, University of Copenhagen, Denmark}

\maketitle

\begin{abstract}
	This paper introduces a $K$-function for assessing
        second-order properties of inhomogeneous random measures generated by marked point processes. The
	marks can be geometric objects like fibers or sets of positive
	volume, and the presented $K$-function takes into
          account geometric
	features of the marks, such as tangent directions of
      fibers. The $K$-function requires an estimate of the inhomogeneous
      density function of the random measure. We introduce 
      parametric estimates for the density function based on
      parametric models that represent large scale features of the inhomogeneous
	random measure. The proposed methodology  is applied to simulated fiber patterns
	as well as a three-dimensional data set of steel fibers in concrete.
\end{abstract}

\noindent {\bf Keywords:} 
	 fiber process, germ-grain model, inhomogeneous, $K$-function, marked point process, random measure, tangent vector.

\section{Introduction}

The $K$-function \citep{ripley:76} continues to be a key tool for studying
interactions in spatial point patterns. For a stationary
point process, the $K$-function times the intensity gives for each distance the expected number of
further points within that distance from a typical point. In \cite{baddeley:moeller:waagepetersen:00} the
$K$-function was extended to inhomogeneous point processes with
non-constant intensity functions but satisfying a certain second-order
translational invariance (second-order intensity-reweighted
stationarity). Point processes can be viewed as integer valued random
measures and the $K$-function has been
generalized to more general random measures corresponding to e.g.\ length measures for
random patterns of fibers \citep{chiu:etal:13} or area measures on germ-grain models
with grains of positive area \citep{gallego:ibanez:simo:16}. While the results in
\cite{chiu:etal:13} are restricted to stationary fiber processes,
\cite{gallego:ibanez:simo:16} adopted ideas from \cite{baddeley:moeller:waagepetersen:00} to obtain a
$K$-function for inhomogeneous random sets. \cite{lieshout:18}
introduced a cross $K$-function for inhomogeneous bivariate random
measures generalizing results for the stationary case in \cite{stoyan:ohser:82}.

 The objective of the current paper is related to
\cite{gallego:ibanez:simo:16} in that we define a $K$-function for 
inhomogeneous random measures. 
 The modeling framework for our results essentially
corresponds to marked point process (germ-grain) models with
observable marks (grains).  This allows
us to derive a simple expression for the proposed $K$-function under a null
model of no interaction.
This is in contrast to
\cite{gallego:ibanez:simo:16} who assume that only the union of grains can
be observed. In some cases the assumption of observable grains is
restrictive but it allows for further theoretical insight. Moreover, since 3D
imaging is becoming more common, the assumption of observable grains
becomes less restrictive. For instance, the overlap of objects in 2D images is often due
to a planar projection of a 3D structure. While we define our
$K$-function in a general setup for germ-grain type random measures,
we pay special attention to the case of inhomogeneous fiber
patterns.

When considering patterns of geometric
objects more complex than points it is also relevant to consider local
geometric features of the objects, e.g.\ tangent directions of fibers. Accordingly, our
proposed $K$-function takes into account such geometric
features. In the special case of fiber patterns, our work is related to
\cite{schwandtke:88} who considers weighted second-order measures and
reduced moment measures where the weights depend on geometric features of the fibers such as tangent directions. However, the
theoretical results of \cite{schwandtke:88} are only obtained in a stationary
setting. Similar to \cite{gallego:ibanez:simo:16},
\cite{schwandtke:88} considers the union of fibers where the
applicability of our results relies on being able to distinguish
fibers.

Our $K$-function requires knowledge of the first-order moment density
of the random measure. \cite{gallego:ibanez:simo:16} suggest
kernel estimation of the density function and \cite{lieshout:18} also
uses kernel estimation in the data example considered. In contrast, we
construct new parametric estimators that are less susceptible to overfitting.

A different approach is considered in \cite{hartung:etal:21} who
develop a  mark-weighted
$K$-function \citep{penttinen:stoyan:henttonen:92} for stationary fiber
patterns. The so-called currents metric is used in \cite{hartung:etal:21}
to measure the similarity of fibers both in terms of distance and tangent
directions. This requires that fibers can be observed in their
entirety. This can be a limitation due to edge effects when fibers
extend outside the observation window. In contrast, in the context of
fiber patterns, we only use local information and we can provide edge
corrected versions of our proposed $K$-function.

Summarizing, we introduce a new $K$-function for inhomogeneous random
measures with geometric features generated by germ-grain models. The $K$-function does not require knowledge of the
  germs and only uses local information about the grains in
  neighborhoods used to define the $K$-function.  We also introduce
  novel parametric estimators for inhomogeneous
first-order moment densities needed for the estimation of the $K$-function. We apply our new $K$-function to
simulated data sets and a data example of a 3D fiber pattern.

\section{Moment measures and $K$-function for inhomogeneous random
  measures with geometric features}

Our basic framework for inhomogeneous random measures is a random
countable collection $Z=\{m_1,m_2,\ldots\}$ of random measures $m_i$ taking values in some space $\M$ of finite measures on
$\R^d \times \Sg$ where $\Sg$ is a metric space 
with metric $d_{\Sg}$. In applications, each $m_i$ will often be a measure associated to a geometric object in $\R^d$ with geometric features of the object represented by points in $\Sg$ as in the following example.
\begin{example}\label{ex:segment_def}
Let $\{\Gamma_1, \Gamma_2,\ldots\}$ be an oriented segment
process. Each
$\Gamma_i$ is a random line segment determined by its midpoint $u_i$,
random direction $\Theta_i \in 
\Sg^{d-1}$, and random length
$L_i$. Letting $\Sg = \Sg^{d-1}$ be the space of possible segment directions,  there is an associated collection of random measures $\{m_1,m_2,\ldots\}$ given by
\[ m_{i}(A \times S) = \int_{\Gamma_i} \ind [x \in A, \Theta_i \in S] \lambda(\dd
x) = \int_{-L_i/2}^{L_i/2}  \ind [u_i + r \Theta_i \in A, \Theta_i \in S] \dd r,\]
where $A\subseteq \R^d$, $S\subseteq \Sg$ and $\lambda$ denotes length measure on the segment.
\end{example}
We return to
	Example~\ref{ex:segment_def} in Section~\ref{sec:resultsformarked} where we also give specific constructions of $Z$.
Other examples that fit our framework is when $m_i$ are volume
measures on objects in $\R^d$ as in \cite{gallego:ibanez:simo:16} or
length measures on weighted fibers as suggested by
\cite{schwandtke:88}. The fiber case generalizes Example \ref{ex:segment_def} and will be discussed further in Section
\ref{sec:fiberprocess}. 

We define a random measure $\Phi$ on $\R^d \times \Sg$ by
\begin{equation}\label{eq:Phidef} 
\Phi( A \times S) = \sum_{m \in Z} m(A \times S), \quad A
  \subseteq \R^d, S \subseteq \Sg. 
  \end{equation}
The first moment measure of $\Phi$ is
\begin{equation}\label{eq:mu} 
\mu(A \times S)= \EE \Phi( A \times S) 
\end{equation}
for $A \subseteq \R^d$ and $S \subseteq \Sg$.
If $\mu $ has a density $\rho(\cdot,\cdot)$ with respect to Lebesgue measure times  a reference measure $\nu$ on $\Sg$, then
\begin{align*}
\int_{\R^d \times \Sg} f(z,s) \mu(\d z \times \dd s) &= \int_{\R^d \times \Sg} f(z,s) \rho(z,s) \dd z  \nu(\dd
s). 
\end{align*}
In the sequel we are going to assume that $\rho$ and hence $\mu$ are
well-defined. Some examples where this is satisfied are given in Section~\ref{sec:firstmomentX}.
If $\rho(\cdot,\cdot)$ is constant we say that $\Phi$ is
homogeneous and otherwise that $\Phi$ is inhomogeneous. Note that
inhomogeneity can be due to spatial variation when $\rho$ varies as a
function of its first argument as well as due to non-uniformity of the density
$\rho(z,\cdot)$, $z \in \R^d$, as a function of the `geometric' argument.

The second-order factorial type moment measure for $\Phi$ is defined as
\[ \alpha^{(2)}( A \times S, B \times T) = \EE
  \sum_{m,m' \in Z}^{\neq} m(A \times S)m'(B \times T)\]
  for $A,B \subseteq \R^d$ and $S,T \subseteq \Sg$. Here $\neq$ over
  the sum indicates that the sum is over pairs of distinct measures
  $m$ and $m'$ in $Z$.

For simplicity of exposition, we here and in the following omit details of measurability. For
example, when writing $A \subseteq \R^d$ it is understood that $A$ is
a Borel measurable subset of $\R^d$. Moreover, we routinely use the
standard measure-theoretical arguments that allows us to establish
integral equalities involving general non-negative functions from the
corresponding equalities involving indicator functions (i.e.\ passing
from indicator functions to simple functions to increasing limits of
simple functions). We use $f$ as generic notation for a non-negative measurable
function where the domain of $f$ will be clear from the context.

\subsection{Reweighted moment measures and $K$-function}\label{sec:K}

Inspired by \cite{baddeley:moeller:waagepetersen:00}, we consider
density-reweighted versions of the first- and second-order moment
measures. First note that the reweighted first-order measure 
\[ \mu_{w}(A \times S)= \EE \sum_{m \in Z} \int_{\R^d \times
    \Sg} \frac{\ind [x \in A,s \in S]}{\rho(x,s)}m(\dd x \times \dd
  s) \]
becomes stationary in the sense $\mu_w(A \times S)=\mu_w((A+h) \times
S)$ for all $h\in \R^d$. 
Next, the 
reweighted second-order measure $\alpha^{(2)}_{w}$ is defined
as
\begin{align}\label{eq:alpha2rw}  &\alpha^{(2)}_{w}(A \times S,B \times T)
                 \nonumber \\=&
  \EE  \sum_{m ,m' \in Z}^{\neq}    \int_{(\R^d
  \times \Sg)^2}   \!\!\! \!\!\! \!\!\! \frac{\ind [(x,s) \in A \times S,(y,t) \in B \times
  T]}{\rho(x,s)\rho(y,t)}
 m(\dd x \times \dd s)m'(\dd y \times
  \dd t). 
\end{align}

We say that $\alpha^{(2)}_{w}$ is stationary if
\[ \alpha^{(2)}_{w}(A \times S,B \times T) =
  \alpha^{(2)}_{w}((A +h)\times S,(B +h)\times T)\]
 for $h \in
  \R^d$. In that case a $K$-function can be defined as
follows.
\begin{defi}
For any observation window $W \subseteq \R^d$ of positive volume
$|W|>0$, and $r_1,r_2 >0$ we define
\begin{align}
& K_w(r_1,r_2;W) \label{eq:weightedK} \\ \nonumber
 = &\frac{1}{|W|} \E \!\!\! \sum_{m,m' \in Z}^{\neq} \int_{ (\R^{d}\times \Sg)^2} \!\!\!\!\!\!\!\!\!
  \frac{\ind[x\in W, \|x  -y\|\leq r_1,
  	d_{\Sg} (s,t) \le r_2] }{\rho(x,s)\rho(y,t)}  m(\d x
  \times \dd s) m'(\d y \times \dd  t)   \\
=& \frac{1}{|W|} \int_{W\times \Sg \times \R^d \times  \Sg} \ind[ \|x  -y\|\leq r_1,
d_{\Sg} (s,t) \le r_2] \alpha_w^{(2)}(\d x \times \d s , \d y \times \d t). \nonumber
\end{align}
If $\af^{(2)}_{w}$ is stationary, then $K(\cdot,\cdot;W)$ does not depend on $W$ and
we define 
\begin{equation*}
K_w(r_1,r_2)=K_w(r_1,r_2;W) 
\end{equation*}
for any choice of $W$.
\end{defi}
In Section~\ref{sec:null} below, we discuss when $\alpha_w^{(2)}$ is stationary and show that $K_w$ has a simple  expression under the null model that the measures in $Z$ are  independent. 

Unfortunately, contrary to the
  case of the $K$-function for point processes, the above defined $K_w$ does not seem to have a
  simple interpretation in terms of Palm expectations. However, from
  the definition \eqref{eq:weightedK} it is clear that the
  $K_w$-function is a measure of similarity of pairs of random measures $m$ and
  $m'$ in terms of spatial distance and geometric
  attributes of $m$ and $m'$.

  \section{Results and examples based on marked point process representation}\label{sec:resultsformarked}

To construct specific models, it is convenient to represent
	$Z$ as a marked point process $X= \{ (u,m)
	\}_{u \in Y}$ where $Y$ is a point process on $\R^d$ and the marks $m$
	are the random measures in $Z$. This viewpoint is natural when $Z$ arises from a germ-grain model by letting $Y$ be the point process formed by the germs and the marks be the measures associated with the grains. For instance in Example \ref{ex:segment_def}, $Y$  would be the point process formed by the midpoints of line segments. We will often refer to the points in $Y$ as center points even if they do not need to have such an interpretation. We denote the intensity of $Y$ by $\rho_Y$.
	
	 In this section, we take this marked point process approach and provide details and examples regarding the moment measures $\mu$ and $\alpha^{(2)}$ and the $K_w$-function by exploiting properties of the marked point process $X$.  

  \subsection{First moment measure}\label{sec:firstmomentX}
  
The first-order moment measure of the marked point process $X$ is
\[ \mu_X( A \times F) = \EE  \sum_{(u,m) \in X} \ind [ u \in A, m \in F]  \]
for $A \subseteq \R^d, F \subseteq \M$. By disintegration we can
express $\mu_X$ as
\[ \mu_X( A \times F) = \int_A \EE_u \ind [m \in F] \rho_Y(u) \dd u, \]
where for a fixed $u$ we can view $\EE_u$ as expectation for a marked point $(u,m)$ conditional
on $u \in Y$.
Then by standard arguments,
\begin{align}  \label{eq:campbellX} \EE \sum_{(u,m) \in X} f(u,m) = 
    \int_{\R^d} \EE_uf(u,m)   \rho_Y(u) \dd u. \end{align}
Combining the definitions \eqref{eq:Phidef}  and  \eqref{eq:mu} of $\Phi$ and $\mu$ and \eqref{eq:campbellX},
\begin{align} \nonumber 
\int_{\R^d \times \Sg} f(z,s) \mu(\d z \times \dd s) & = 
\EE \sum_{(u,m) \in X} \int_{\R^d \times \Sg} f(x,s) m(\dd x
\times \dd s)  \\ \label{eq:mu_int}
&= \int_{\R^d} \E_u \int_{\R^d \times \Sg} f(x,s) m(\dd x
\times \dd s) \rho_Y(u) \d u.
\end{align}
We now explore a few examples regarding the nature of the density $\rho$.

\begin{example}
Suppose $\EE_u m(\cdot)$ has a density $\xi_u$ with respect to
Lebesgue measure times $\nu$ where $\nu$  is a reference measure on $\Sg$ (e.g.\ surface measure when $\Sg$ is the unit sphere). Then
\begin{equation*}
  \EE_u m(A \times S) = \int_{\R^d \times \Sg} \ind [(x,s) \in A \times S] \xi_u(x,s) \dd x \nu(\dd
  s) , \quad A \subseteq \R^d, S \subseteq \Sg.
 \end{equation*}
By  \eqref{eq:mu_int},
\begin{align*} \mu( A \times S)  
= & \int_{\R^d \times \R^d  \times \Sg} \ind [ z \in A, s \in S]
                                \xi_u( z,s) \dd z \nu(\dd s)
                                \rho_Y(u) \dd u \\  = &
\int_{\R^d \times \Sg} \ind [ z \in A, s \in S] \int_{\R^d} \xi_u(z,s) 
                                \rho_Y(u) \dd u \dd z \nu(\dd s) 
\end{align*}
meaning that $\mu $ has density
\begin{equation}\label{eq:rho} \rho(z,s) = \int_{\R^d } \xi_u(z,s) \rho_Y(u) \dd u
 \end{equation}
with respect to Lebesgue measure times $\nu$.
  Note that \eqref{eq:rho} may not be finite, in which case the measure $\mu$
and density $\rho$ are not well-defined. A sufficient condition for $\rho$ to be finite is if $\rho_Y$ is
bounded, $\xi_u(z,s)=\xi(z-u,s)$ for some density $\xi$ not depending on
$u$, and 
\[ \int_{\R^d} \xi(z,s) \dd z  < \infty, \quad s \in \Sg. \]
\end{example}
The density $\xi_u$ may not always exist, as the next example shows, but a density for $\mu $ may still exist. 

\begin{example}\label{ex:xidoesnotexist} Consider the segment process in Example \ref{ex:segment_def} and let $Y$ be the point process formed by segment midpoints. For a given $u\in Y $, the corresponding mark $m$ was 
\[ m(A \times S) = \int_{-L/2}^{L/2}  \ind [ u+ r \Theta \in A, \Theta \in S] \dd r.\]
Up to length, both the segment and its direction are uniquely determined by $\Ta$. Due to
this degeneracy, a density for $\EE_u$ with respect to Lebesgue measure on $\R^{d}$ times surface measure $\nu$ on $\Sg^{d-1}$ does
not exist. 

Nevertheless, assume that given $Y$, the pairs $(L_i,\Theta_i)$ are i.i.d. Moreover, assume that $L_i$ and $\Theta_i$ are
independent and $\Theta_i $ has density $\eta$ with respect to $\nu$. In this case, we can still identify a density
for $\mu$: 
\[ \rho(z,s)= \E  \int_{-L/2}^{L/2} \eta(s)  \rho_Y(z-r
  s) \dd r .\]
Details are given in \ref{sec:simplefibermodel}.
\end{example}

\subsection{Second moment measure}\label{sec:secondmomentX}

We define a second-order factorial type moment measure for $X$ by
\[ \alpha_X^{(2)}( A \times F , B \times G) = \EE
  \sum_{(u,m),(u',m') \in X}^{\neq} \ind [ u \in A, m \in F, u' \in B, m'
  \in G] \]
  for $A,B \subseteq \R^d$ and $F,G \subseteq \M$. 
By disintegration and assuming existence of the pair
correlation function of $Y$ \cite[e.g.\ ][]{MoellerWaagepetersenl2004}, we can express $\alpha_X^{(2)}$ as
\begin{equation*} \alpha_X^{(2)}( A \times F , B \times G) = \int_{A \times B}
  \rho_Y(u)\rho_Y(u') g_Y(u,u') \EE_{u,u'}\ind [m \in F,m' \in G] \dd u \dd
  u', \end{equation*}
where $g_Y$ is the pair correlation function and $\EE_{u,u'}$ can be interpreted as mean with respect to the
conditional distribution  of $(m,m')$ for marked
points $(u,m)$ and $(u',m')$ given that $u,u' \in Y$. By standard arguments,
\begin{equation}\label{eq:campbell_alpha}
 \EE \!\!\! \!\!\! 
\sum_{\substack{(u,m),\\(u',m') \in X}}^{\neq} \!\!\! \!\!\!  f(u,m,u',m') 
 = \int_{\R^{2d}} \!\!\! \!\!\! 
\rho_Y(u)\rho_Y(u') g_Y(u,u') \EE_{u,u'}f(u, m,u', m' ) \dd u \dd
u'. \end{equation}

Assume $\EE_{u,u'}m(\cdot)m'(\cdot)$ has density $\xi_{u,u'}$, i.e.\ 
\begin{align} &\EE_{u,u'}m(A\times S)m'(B\times T) \nonumber\\ = &\int_{(\R^d \times \Sg)^2}
\ind [(x,s) \in A \times S, (y,t)
\in B \times T] \xi_{u,u'}((x,s),(y,t)) \dd x
  \dd y \nu(\dd s) \nu(\dd t). \label{eq:xiuu'}\end{align}
with $\nu$ a reference measure on $\Sg$ as in the previous section.
Then 
\begin{align*}
&\alpha^{(2)}( A \times S, B \times T)\\ = &\int_{(\R^d)^2 }
  \rho_Y(u)\rho_Y(u') g_Y(u,u') \int_{(\R^d \times \Sg)^2} \ind [(x,s) \in A \times S, (y,t)
\in B \times T] \\ &\xi_{u,u'}((x,s),(y,t)) \dd x
  \dd y \nu(\dd s) \nu(\dd t) \dd u \dd u' \\
= & \int_{(\R^d \times \Sg)^2} \ind [(z,s) \in A \times S, (z',t)
\in B \times T] \\ 
& \int_{(\R^d)^2}  \rho_Y(u)\rho_Y(u') g_Y(u,u') \xi_{u,u'}((z,s),(z',t)) \dd u \dd u'  \nu(\dd s) \nu(\dd t) \dd z \dd z'.
\end{align*}
It follows that $\alpha^{(2)}$ has density
\begin{equation}\label{eq:rho2} \rho^{(2)}(z,s,z',t) = \int_{\R^d \times \R^d}
\rho_Y(u)\rho_Y(u') g_Y(u,u')  \xi_{u,u'}((z,s),(z',t)) \dd
u \dd u'.   \end{equation}
We return to this expression in Section~\ref{sec:null}.

\subsection{$K$-function under null model}\label{sec:null}

Consider the null model that $Y$ is Poisson and that given $Y$, the
marks are independent (but not necessarily identically distributed). Then the $K$-function has a simple expression:
\begin{prop}\label{prop:null}
	Under the null  model,  $\alpha^{(2)}_w$ is stationary and 
	 \begin{equation}\label{eq:K_null}
	 K_w(r_1,r_2)=K_0(r_1,r_2) := |b(0,r_1)|\nu^2((s,t)\in \mathbb{S}\times \Sg \, |\, d_{\Sg}(s,t)\leq r_2).
	 \end{equation}
\end{prop}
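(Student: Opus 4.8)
The plan is to show that under the null model the reweighted second-order measure $\alpha^{(2)}_w$ has constant unit density with respect to Lebesgue measure times $\nu$ on each factor; this immediately gives both stationarity and the closed form for $K_w$. The whole argument hinges on establishing that the second-order density $\rho^{(2)}$ of \eqref{eq:rho2} factorizes as $\rho^{(2)}(z,s,z',t)=\rho(z,s)\rho(z',t)$, so that the reweighting by $1/(\rho(z,s)\rho(z',t))$ in \eqref{eq:alpha2rw} exactly cancels it.

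First I would exploit the two defining features of the null model separately. Since $Y$ is Poisson its pair correlation function satisfies $g_Y\equiv 1$, removing that factor from \eqref{eq:rho2}. Next I would argue that conditional independence of the marks given $Y$, combined with the Slivnyak--Mecke property of the Poisson process, forces the two-point conditional mean $\EE_{u,u'}m(\cdot)m'(\cdot)$ to factor as $\EE_u m(\cdot)\cdot\EE_{u'}m'(\cdot)$: the reduced two-point Palm distribution of a Poisson process is the original process with $u$ and $u'$ inserted, and given this configuration the mark at $u$ is independent of the mark at $u'$ with the same respective conditional laws as appear in the single-point means. At the level of densities \eqref{eq:xiuu'} this reads $\xi_{u,u'}((z,s),(z',t))=\xi_u(z,s)\xi_{u'}(z',t)$. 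Substituting into \eqref{eq:rho2} with $g_Y\equiv1$ and separating the double integral over $(u,u')$ then gives $\rho^{(2)}(z,s,z',t)=\bigl(\int_{\R^d}\xi_u(z,s)\rho_Y(u)\dd u\bigr)\bigl(\int_{\R^d}\xi_{u'}(z',t)\rho_Y(u')\dd u'\bigr)$, which by \eqref{eq:rho} is exactly $\rho(z,s)\rho(z',t)$.

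With the factorization in hand I would write $\alpha^{(2)}_w$ as the integral of $\rho^{(2)}(z,s,z',t)/(\rho(z,s)\rho(z',t))$ against $\dd z\,\nu(\dd s)\,\dd z'\,\nu(\dd t)$; the factored numerator cancels the denominator, leaving the unit density and hence $\alpha^{(2)}_w(A\times S,B\times T)=|A||B|\nu(S)\nu(T)$. This is manifestly translation invariant, so $\alpha^{(2)}_w$ is stationary and $K_w$ is well-defined independently of $W$. Finally, inserting the unit density into the last line of \eqref{eq:weightedK} and carrying out the four integrations: the $y$-integral of $\ind[\|x-y\|\le r_1]$ contributes the ball volume $|b(0,r_1)|$ independently of $x$, the $x$-integral over $W$ contributes $|W|$ which cancels the prefactor $1/|W|$, and the remaining integral of $\ind[d_\Sg(s,t)\le r_2]$ against $\nu(\dd s)\,\nu(\dd t)$ is by definition $\nu^2\{(s,t): d_\Sg(s,t)\le r_2\}$. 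This yields \eqref{eq:K_null}.

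The step I expect to be the main obstacle is the rigorous justification of the factorization $\xi_{u,u'}=\xi_u\xi_{u'}$, since it is the only place where both model assumptions genuinely enter and where one must be careful about the precise meaning of the two-point Palm expectation $\EE_{u,u'}$. In particular one should verify that the conditional mark law at $u$ is unaffected by the insertion of the second germ $u'$ (guaranteed by the Poisson/Slivnyak structure together with the marks being independent given $Y$), and that $\rho(z,s)>0$ on the relevant domain so that the cancellation in the reweighting is legitimate. Everything after that point is routine integration.
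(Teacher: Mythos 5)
Your overall strategy is the same as the paper's: under the null model the second-order structure factorizes, the reweighting by $1/(\rho(z,s)\rho(z',t))$ cancels it exactly, and the remaining integrations give $|b(0,r_1)|$ times the $\nu^2$-measure of $\{(s,t): d_{\Sg}(s,t)\le r_2\}$. The final computation, and the observation that a product measure with unit density is translation invariant (hence $\alpha^{(2)}_w$ is stationary and $K_w$ is independent of $W$), are both correct.

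The one genuine problem is that you route the factorization through the densities $\xi_u$ and $\xi_{u,u'}$ of \eqref{eq:xiuu'} and the formula \eqref{eq:rho2} for $\rho^{(2)}$. These densities need not exist: the paper's Example~\ref{ex:xidoesnotexist} shows that for the segment (and more generally fiber) process --- the main application --- the conditional mark measure $\EE_u m(\cdot)$ is concentrated on a lower-dimensional subset of $\R^d\times\Sg$ and has no density with respect to Lebesgue measure times $\nu$, even though $\mu$ itself does. So as written your argument only covers the absolutely continuous case. The paper avoids this by staying at the level of measures: from $g_Y=1$ and $\EE_{u,u'}\ind[m\in F,m'\in G]=\EE_u\ind[m\in F]\,\EE_{u'}\ind[m'\in G]$ it concludes directly via \eqref{eq:campbell_alpha} that $\alpha^{(2)}_w=\mu_w\otimes\mu_w$, and $\mu_w$ always has unit density by construction (provided $\rho>0$, as you note). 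Your proof is repaired by the same move: replace the density identity $\xi_{u,u'}=\xi_u\xi_{u'}$ by the factorization of the measures $\EE_{u,u'}m(\cdot)m'(\cdot)=\EE_u m(\cdot)\,\EE_{u'}m'(\cdot)$ and integrate against \eqref{eq:campbell_alpha}, after which no densities other than $\rho$ are needed. Your closing caveat about the two-point Palm mark law agreeing with the one-point one is well taken; the paper asserts this factorization without comment, implicitly using that the conditional mark law at a germ depends on $Y$ only through that germ's location, which holds in all the models it considers.
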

\begin{proof} 
	Under the null model, $g_Y=1$ and $\EE_{u,u'}\ind[m \in F,m'
        \in G]= \EE_{u}\ind[m \in F] \EE_{u'} \ind[m' \in G]$. It follows from \eqref{eq:campbell_alpha} that $\alpha^{(2)}_w = \mu_w \otimes \mu_w$. Hence
	\begin{align*}
	K_w(r_1,r_2){}& = \frac{1}{|W|}\int_{(\R^d\times \mathbb{S})^2} \mathds{1}[z\in W,\|z-z'\|\leq r_1, d_{\mathbb{S}}(s,t)\leq r_2]  \d z \d z' \nu(\d s)\nu(\d t) \\& =|b(0,r_1)|\nu^2((s,t)\in \Sg \times \Sg\, | \, d_{\Sg}(s,t)\leq r_1).
	\end{align*}
\end{proof}
Apart from the null model, it is hard to identify specific models for which $\alpha_w^{(2)}$ is stationary.
Assuming that  $\alpha^{(2)}$ has density $\rho^{(2)}$, we may define
 \[ g(z,s,z',t)= \frac{\rho^{(2)}(z,s,z',t)}{\rho(z,s)\rho(z',t)} \]
as a normalized version of $\rho^{(2)}$ in the spirit of the pair correlation function of a point process. 
 Then stationarity of $\alpha_{w}^{(2)}$ becomes equivalent to
\begin{equation}\label{eq:gtrans} g(z,s,z',t)=g(z+h,s,z'+h,t), \quad
  h \in \R^d.\end{equation}
It is not easy in general to identify specific choices of $X$ for which 
\eqref{eq:gtrans} holds due to the expressions
\eqref{eq:rho} and \eqref{eq:rho2} for $\rho$
and $\rho^{(2)}$ that confound properties of $Y$ and the mark
conditional densities.

\begin{rem} \label{remDiffFibers}
In the definition  \eqref{eq:weightedK} of $K_w$, the sum is only
taken over pairs of distinct measures $m$ and $m'$. Taking the sum over all
pairs instead would result in an extra term in \eqref{eq:K_null} which
in the marked point process setup is
	\begin{align*}
	&\int_{\R^d} \rho_Y(u) \E_u \!\int_{(\R^d\times \mathbb{S})^2}\hspace{-0.8cm}
	\frac{\ind[x\in W, \|x  -y\|\leq r_1,
		d_{\Sg} (s,t) \le r_2]} {\rho(x,s)\rho(y,t)} m(\d x\times \d s) m (\d y \times \d t) \d u.
	\end{align*}
	This depends on the actual mark distribution in a non-trivial way, so we no longer get a simple expression for the $K$-function under the null model. This is a well-known problem even in the stationary case,
	\cite[see e.g.\ Ex.\ 8.4 in][]{chiu:etal:13}. Excluding
	identical pairs was suggested in
	\cite{jensen:kieu:gundersen:90} for the stationary
	case. However, it has been customary to include all pairs
	since this avoids the computational problem of
    distinguishing between different marks.
\end{rem}

\begin{rem}\label{rem:w2}
Consider the marked point process setup. Then
another way of defining a reweighted $K$-function is to replace the weights $\rho(x,s)$ by
	$w(u,x,s)= \xi_u(x,s)\rho_Y(u)$.  This results in a larger
        class of models for which $\alpha^{(2)}_w$ is stationary and the $K$-function hence is well-defined, see
        the discussion in \ref{sec:w2}. The resulting $K$-function
        under the null  model is the same as in \eqref{eq:K_null} up
        to a constant that depends on the support of $\xi_u$.
        However, this choice of weights has  several drawbacks. It requires that the densities
  $\xi_u$ exist and have finite support.  Explicit knowledge of the center points is also required, which may cause problems e.g.\ if some center points lie outside the observation window.
 Moreover, the estimation of $\xi_u$ is more involved, and since
 values of $\xi_u$ far from the center point may be small,
 using them for the $K$-function may be numerically unstable. We
 therefore do not consider these weights further in this paper. 
\end{rem}

\section{Specialization to fibers}\label{sec:fiberprocess}

The case where $X$ is defined by a fiber process is an example of
  special interest for our later application in
  Section~\ref{sec:dataexample}. So consider a fiber
  process, i.e.\ a marked point process $\{(u,\gamma)\}_{u\in
      Y}$  where for a marked point $(u,\gamma)$, $\gamma$ is a finite length smooth
  fiber with midpoint at $u$. In the case of
  oriented fibers, $\Sg$ is the space $\Sg^{d-1}$ of unit vectors in
  $\R^d$, and for unoriented fibers, we let $\Sg$ be a
    hemi-sphere in $\R^d$ such that a point $s\in \Sg$ uniquely identifies a
  line in $\R^d$ passing
  through the origin. 
	To each fiber $\gm $, we associate the measure
	\begin{equation}\label{eq:mgamma} 
	m( A \times S) = \int_{\gm} \ind[ x \in A, \tau(x) \in S] \lambda(
	\dd x) 
	\end{equation}
	where $\tau(x) \in \Sg$ is a unit tangent vector at
        $x$ and $\ld$ denotes the length measure  on the
        fiber. The tangent vector $\tau(x)$ is not uniquely defined for
          unoriented fibers and we simply choose the one ($\tau(x)$ or
          $-\tau(x)$) belonging to
          the hemi-sphere $\Sg$. With a slight abuse of notation we use
$X$ to denote both the fiber process $\{(u,\gamma)\}_{u\in Y}$ and the marked point process $\{(u,m)\}_{u\in Y}$ with
marks given by \eqref{eq:mgamma}. For simplicity of exposition, we consider only the case of oriented fibers below, the modification to the unoriented case being straightforward.

In the oriented case, $\Sg=\Sg^{d-1}$ is a metric space with a metric $d_{\angle}$ given by the angle between unit vectors
\begin{equation*}
d_{\angle}(\tau_1,\tau_2) = \arccos( \tau_1^\T \tau_2 )
\end{equation*}
for $\tau_1,\tau_2\in \Sg^{d-1}$. In the unoriented case, $\Sg$ has the metric $d$ given by
\begin{equation*}
d(\tau_1,\tau_2) = \min \{ d_{\angle}(\tau_1,\tau_2), d_{\angle}(\tau_1,-\tau_2)\},
\end{equation*}
which can be interpreted as the angle between the lines determined by $\tau_1$ and $\tau_2$.
The measure $\nu$ on $\Sg$ is the 
surface area measure normalized so that $\nu(\Sg)=1$.

The $K$-function from Section~\ref{sec:K}
becomes
\begin{align*} 
& K_{w}(r_1,r_2;W)  \\ = & \frac{1}{|W|} \E
                        \sum_{\substack{(u,\gm),\\(u',\gm') \in
  X}}^{\neq}  \int_{\gm}\int_{\gm'} \frac{ \ind[x\in W, \|x  -y\|\leq r_1,
  d_{\Sg} (\tau(x),\tau(y)) \le r_2] } {\rho(x,\tau(x))\rho(y,\tau(y))} \lambda(\dd y) \lambda(\dd
  x).\nonumber                  
\end{align*}
The value of the $K$-function under the null  model is 
\begin{equation*}
K_{0}(r_1,r_2)=  |b(0,r_1)| \nu(\tau \in \Sg \mid  d_{\Sg}(\tau,e_1)\leq r_2),
\end{equation*}
where $e_1\in \Sg$ is the north pole.
In the oriented case, for $d=2$,
\begin{equation*}
K_0(r_1,r_2) = r_1^2 r_2, \quad r_1\geq 0,\, r_2\in [0,\pi],
\end{equation*}
and for
$d=3$,
\begin{align*}
K_0(r_1,r_2){}&= \frac{2\pi}{3} r_1^3 (1-\cos r_2)\quad r_1\geq 0,\,  r_2\in [0,\pi].
\end{align*}
In the unoriented case, $K_0$ is twice of $K_0$ in the oriented case for $r_1\geq 0, r_2\in [0,\pi/2]$.

\section{Estimation}

In this section we discuss estimation of $K_w$ given an
observation of $Z$ restricted to a bounded observation window $W$. We
first consider the case of a known
density $\rho$ and next we consider estimation of $\rho$ based on
$\Phi$ restricted to $W$.

\subsection{Estimation of $K_w$}
Let $W$ be  a bounded observation window of positive volume.
For $m \in Z$, suppose we only observe the restriction of  $m$ to $W\times \Sg$. Assume for now that the density $\rho$ is known. We then estimate the $K$-function by
\begin{align}
& \hat K_{w}(r_1,r_2;W) \nonumber \\ = & \frac{1}{|W|} 
                        \sum_{m,m' \in
  Z}^{\neq}  \int_{(\R^d \times \Sg)^2}
\frac{\ind[x\in W,y\in W]}{\rho(x,s)\rho(y,t)}
e(x,y)  m(\dd x \times \dd s) m(\dd y \times \dd t)  \label{eq:Kest}
\end{align}
where $e(x,y)$ is an edge correction factor. Compared
to the definition \eqref{eq:weightedK} we only include locations $y$ that are
observed within $W$. Therefore an edge correction $e$ is introduced to ensure
unbiasedness of the estimate. Different types of edge corrections are
possible but we use here
\[ e(x,y ) = \frac{|W|}{|W\cap W_{y-x}|}\]
where $W_{y-x}$ is $W$ translated by the vector
$y-x$. These edge corrections make \eqref{eq:Kest} unbiased whenever $\alpha_w^{(2)}$ is stationary and they are easily evaluated when $W$
is a hyper rectangle. 

\subsubsection{The fiber case}\label{sec:Kfibercase}

Assume that we observe a fiber process $X$ inside an observation
window $W$, i.e. we observe $\cup_{(u,\gm) \in X}  \gm \cap W$. The
estimator \eqref{eq:Kest} then becomes
\begin{align}
& \hat K_{w}(r_1,r_2;W) \nonumber \\\nonumber
 = & \frac{1}{|W|} 
                        \sum_{\substack{(u,\gm),\\(u',\gm') \in
  X}}^{\neq}  \int_{\gm}\int_{\gm'} \frac{\ind[x\in W,y\in W, \|x  -y\|\leq r_1,
d_{\Sg} (\tau(x),\tau(y)) \le r_2] } {\rho(x,\tau(x))\rho(y,\tau(y))} \\
& \cdot e(x,y)  \lambda(\dd y) \lambda(\dd
  x). \label{eq:Kestimatefiber}
\end{align}
In practice line integrals are replaced by Monte Carlo
estimates, 
\begin{equation} \label{eq:montecarlo} \int_{\gm} f(x,\tau(x)) \ld( \dd x) \approx \frac{1}{\phi}\sum_{x \in X_\gm}
  f(x,\tau(x)) \end{equation}
where $X_{\gm}$ is a point process on $\gm$ of intensity $\phi$.

In applications, $\rho$ is typically not known and must be estimated. We consider
estimation of $\rho$ in the next section.

\subsection{Estimating $\rho$}\label{sec:rhoestimation}

Given a single realization of an inhomogeneous random measure $\Phi$ it is not possible to uniquely decompose variation in the
pattern into `systematic' or `large scale' variation represented by
the density $\rho$ and random `local' variation quantified by the
$K_w$-function. The use of kernel smoothing for estimation of the
density $\rho$ is therefore problematic because an arbitrary
proportion of the total
variation can be absorbed into a kernel estimate of $\rho$ depending
on the level of smoothing applied. This is a well-known problem for
estimation of $K$-functions for inhomogeneous point processes, see
e.g.\ \cite{baddeley:moeller:waagepetersen:00} or
\cite{shaw:moeller:waagepetersen:20}. In these papers it is shown that
the use of a kernel intensity estimate leads to bias in the subsequent
estimation of the $K$-function. Due to these considerations we propose to use a
simple parametric model for the density $\rho$ which is confined to
capture large scale variation in $\Phi$.

\subsubsection{Parametric model for  $\rho$}\label{sec:rhomodel}
We consider a model for
$\rho(z,s)$ of the form
\begin{equation}\label{eq:rhomodel} \rho(z,s;\beta) = \rho(z;\bt) \eta(s)\end{equation}
where $\eta(\cdot)$ is a probability density on $\Sg$ with respect to the measure $\nu$ and $\rho(\cdot;\bt)$ is a linear
or log-linear regression model. 

Suppose for instance
$\rho_Y(u)=\beta_0 + \bt^\T u$ is a
linear trend model and assume that $\xi_u$ exists and is reflection invariant around $u$
  in the sense $\xi_u(z,s)=\xi(z-u,s)$ for a density $\xi$ with
  $\xi(-z,s)=\xi(z,s)$. Then, as shown in \ref{sec:reflectioninvariant}, $\rho(z,s)$ has the
form  \eqref{eq:rhomodel} with $\rho(z;\bt)=l \rho_Y(z)$ for some
positive constant $l$.
 More generally, if 
$\rho_Y$ is a smooth  function and $\xi_u(\cdot,s)$, $u
  \in \R^d$, is
of bounded support and reflection invariant, then $\rho(z,s)$ is well
approximated by the right hand side of \eqref{eq:rhomodel}, see \ref{sec:reflectioninvariant} for details. 
For the segment process models used for simulations in
Section~\ref{sec:simulations}, an expression for $\rho(z,s)$ of the
form \eqref{eq:rhomodel} is also obtained, see Example~\ref{ex:xidoesnotexist}, \ref{sec:simplefibermodel} and \ref{sec:simplefibermodelinteract}. 

The density $\eta$ can be estimated directly from the observed
marks as demonstrated in Section~\ref{sec:dataexample}. In the next section we propose estimating equations for the
parameter $\beta$.

\subsubsection{Estimating equation for $\beta$}\label{sec:rho_estimation}

Assume $\Phi$ is observed on $W$.
Inspired by composite likelihood methods for point
processes \cite[see e.g.\ the review in][]{moeller:waagepetersen:17}, we may consider an estimating function of the form
\begin{align*} e(\beta) & =  \int_{\R^d \times \Sg}  \ind[x\in
                          W] f(x,s;\bt) \Phi(\dd x \times \dd s)  \\
  & -  \int_{\R^d \times \Sg}  \ind[z\in W] f(z,s;\bt) \rho(z,s;\bt) \dd z   \nu(\dd s) \end{align*}
for some appropriate choice of $f$. Note that this type of
  estimating function is applicable for general $\Phi$ not necessarily
  of the form \eqref{eq:Phidef}. In terms of $Z$, 
\begin{align*}  \int_{\R^d \times \Sg}  \ind[x\in W] f(x,s;\bt) \Phi(\dd x \times \dd s) 
= \sum_{m \in Z} \int_{\R^d \times \Sg}  \ind[x\in W] f(x,s;\bt) m(\dd x \times \dd s).
\end{align*}

In case of the linear intensity 
\[ \rho(z,s;\beta) = \eta(s)(1,z^\T)  \beta, \] 
we might use $f(z,s;\bt)= \eta(s)^{-1}\dd
\rho(z,s;\beta)/\dd \beta = (1,z^\T)^\T$ so that 
\begin{align*} 
e(\bt) =  \int_{\R^d \times \Sg}  \ind[x\in W]
  (1,x^\T)^\T \Phi(\dd x \times \dd s) -  \int_{\R^d}  \ind[z\in W]
  (1,z^\T)^\T (1,z^\T)   \dd z  \beta         .
\end{align*}
Letting
\[ R = \int_{\R^d}  \ind[z\in W] (1,z^\T)^\T (1,z^\T) \dd z \text{ and } L = \int_{\R^d \times \Sg} \ind[x\in W] (1,x^\T)^\T \Phi(\dd
    x \times \dd s) \]
we obtain
\[ \hat \beta = R^{-1} L \]
provided $R$ is invertible.

For instance, in the case $d=3$ and $W=[0,a] \times [0,b] \times [0,c]$,
$R$ has the simple expression
\[ R=  |W| \begin{bmatrix} 1 & a/2 & b/2 & c/2\\ a/2 & a^2/3 &
    ab/4 & ac/4 \\ b/2  & ab/4  & b^2 / 3 & bc/4\\
    c/2 & ac/4 &  bc/4 & c^2/3\end{bmatrix}. \]

\subsubsection{The fiber case}\label{sec:fibercaserho}

A fiber $\gm$ will in practice  be represented by points $X_\gm$ sampled
randomly on $\gm$ with intensity $\phi$. Thereby we obtain the unbiased estimate
\[ \hat L = \frac{1}{\phi} \sum_{(u,\gm) \in X} \sum_{x \in X_\gm} \ind[x\in W] (1,x^\T)^\T \]
of $L$. The resulting estimating function $\hat L - R \bt$ can then be
viewed as a first-order estimating function based on the point process
$\cup_{(u,\gm) \in X} X_\gm$ of intensity $\phi \rho
(\cdot;\beta)$. We then obtain the unbiased estimate $\hat \beta=
R^{-1} \hat L$. 

If $\rho(\cdot;\bt)$ is a log-linear model,  we can use
composite likelihood methods \cite[for example][]{waagepetersen:07,moeller:waagepetersen:07} for the point
process $\cup_{(u,\gm) \in X}X_\gm$ to estimate $\phi
\rho(\cdot;\bt)$. In practice, this can be done using the \texttt{spatstat}
\citep{baddeley:rubak:turner:15} procedure \texttt{ppm}.

\section{Simulation studies}\label{sec:simulations}

We have studied the performance of the $K$-function using
  simulations of fiber patterns in two settings: the null model and a
  model with dependency between fiber directions.

\subsection{The null model}\label{sec:null_sim}

We first considered 10,000 simulations under a null model (see
  Section~\ref{sec:null} and \ref{sec:simplefibermodel}) for fiber patterns with
  independent fibers. We simulated the patterns on a window of size
  $20 \times 20$ where the fibers  were generated as
  i.i.d.\ line segments having midpoint at the origin with random uniform orientation and random uniform length between 0 and 2. The
  fibers were then  translated to have midpoints on an inhomogeneous Poisson process with intensity
  decreasing linearly from 3.5 to 0.5 from left to right. An example
  of such a fiber pattern is given in the top left plot of Figure~\ref{fig:Knull}.
\begin{figure}
	\includegraphics[width=\textwidth]{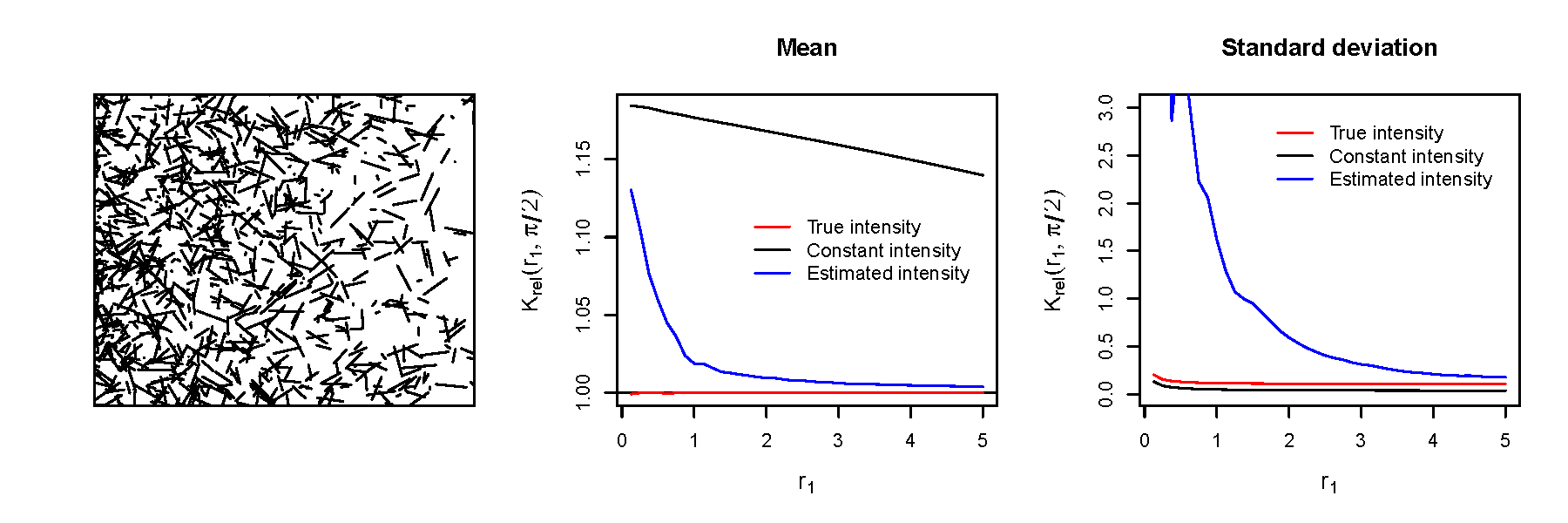}\\
	\includegraphics[width=\textwidth]{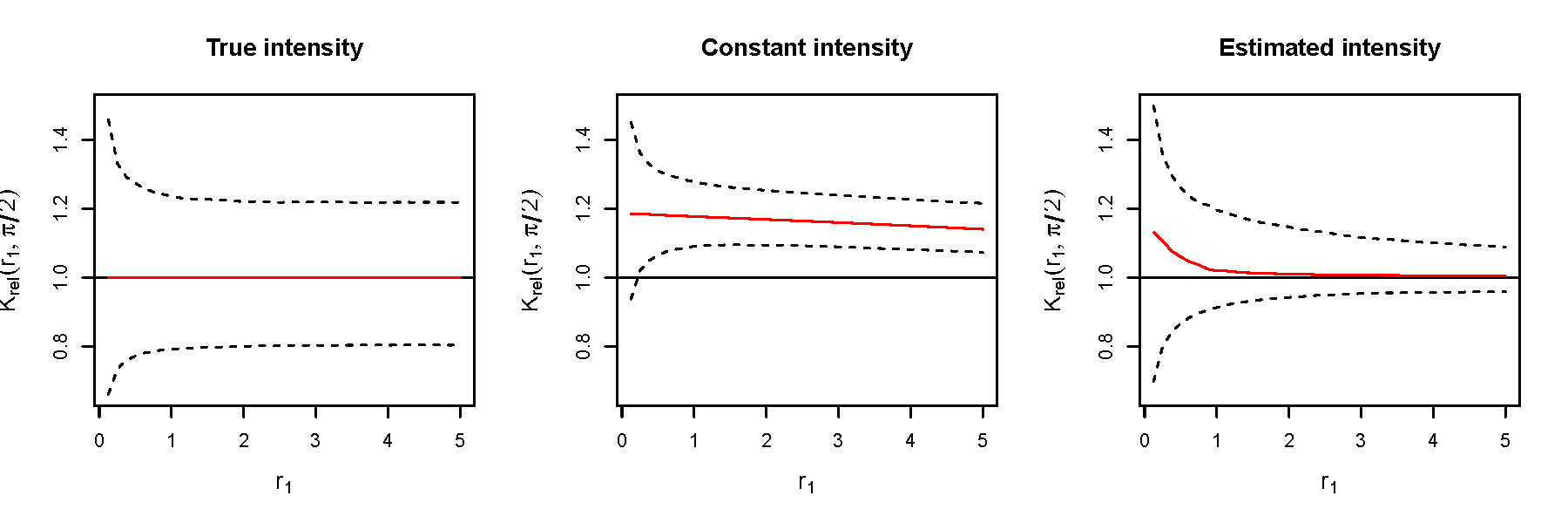}
	\caption{Top left: an example of a simulated fiber
          pattern. Top middle/right: mean/standard deviation of 10,000
          estimates of the relative $K$-function \eqref{eq:relK} when
          the intensity is either known, assumed constant, or
          estimated linear. Lower row: mean (red curve) and 95\%
          probability bands (dashed curves) for estimates of the relative $K$-function
          for the three choices of intensity functions.}\label{fig:Knull}
\end{figure}

For each simulation, we estimated the $K_w$-function by
  \eqref{eq:Kestimatefiber} using for $\rho$ either 1) the true
  theoretical linear
  expression as given in
  \ref{sec:simplefibermodel},  2) an estimate assuming a
  constant intensity, or 3) the linear expression with $\beta$
  estimated as described in Section~\ref{sec:rho_estimation} and with
 $\eta$ the true uniform density. The pointwise means, standard
 deviations, and 95\% probability bands for the relative $K$-function
\begin{equation}\label{eq:relK}
K_{\text{rel}}(r_1,r_2)=\frac{K_w(r_1,r_2)}{K_0(r_1,r_2)}
\end{equation}   
 with $r_2=\tfrac{\pi}{2}$ are shown in Figure \ref{fig:Knull}. We see
 that using the true intensity, the estimator becomes unbiased, as
 expected by theory. Estimating the intensity linearly results in a
 small positive bias, while using a constant intensity yields a larger
 bias of around 0.15. This bias arises because large scale variation in
 $\rho$ is picked up as random interaction by the $K_w$-function when a
 constant intensity is assumed.

 The probability bands show that when the intensity is estimated, the
 resulting curves are typically closer to the mean than when the intensity is
 known. This phenomenon is also known from $K$-functions for point
 processes, see e.g.\ \cite{waagepetersen:guan:09}. Although most
 curves are close to the mean when the intensity
 is estimated linearly, a few outliers occur which is reflected in the standard
 deviations. This is most pronounced for small values of $r_1$ where estimated
 values of $K_{rel}(r_1,\tfrac{\pi}{2})$ range from $-200$ to
 $1000$. This happens because for a few simulations, the estimated
 intensity comes close to zero, or even below zero, resulting in very
 small estimates of $\rho(z,s)$ for some $z$.

\subsection{Dependent fibers}

Next, we simulated 1000 realizations from a model with dependent
fibers on the same window as before and using the same inhomogeneous
Poisson process $Y$ for the fiber center points. To obtain dependent
fibers we simulated two independent Gaussian random fields with exponential correlation function and
evaluated these at the points of $Y$. For each $u$ in $Y$, the
associated values of the Gaussian fields give the $x$- and
$y$-coordinates of one endpoint of a centered line segment while the
negation of this point forms the other endpoint resulting in a
fiber symmetric around the origin. The centered fiber is
  next translated to have midpoint $u$. The random fields are
scaled to obtain a mean fiber length of 1 as in the simulations with
independent fibers. This model is described in more detail in \ref{sec:simplefibermodelinteract} where it
is shown that this model has $\rho$ of the form \eqref{eq:rhomodel}
with $\eta$ a uniform density and $\rho(\cdot;\beta)$ a linear model.

An example of a resulting simulated fiber pattern is shown to the left
in Figure~\ref{fig:aligned}. Visually, it is not easy to determine
exactly what is the nature of the difference between the simulations
from the null model and from the model with dependent fibers. This is,
however, reflected in the $K_w$-function. For the estimation, we
replaced the unknown density function with an estimate as in option 3)
 for the null model in Section \ref{sec:null_sim}. The mean of the estimated relative $K$-function is shown in the right plot of
Figure~\ref{fig:aligned}. The mean estimated
$K_{rel}(r_1,\tfrac{\pi}{2})$ does not deviate much from the value
under the null model. However, the smaller the value of $r_2$, the larger the $K_w$-function becomes compared to the null 
model. This suggests that while the amount of fibers around a point on a given fiber is the same as under independence, fibers tend to be more aligned than under the null model. This effect is present for all $r_1$, but most pronounced for small values, suggesting that the correlation mainly is present on shorter scales.
\begin{figure}
	\includegraphics[width=\textwidth]{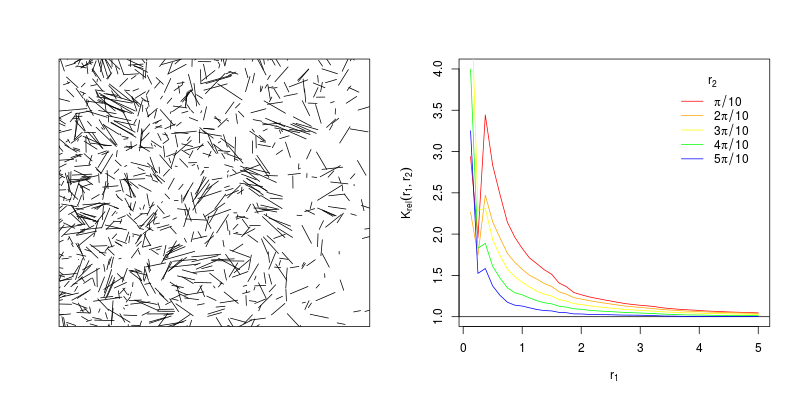}\\
	\caption{Left: a simulation of an inhomogeneous fiber
          pattern with correlated fibers. Right: the mean of the estimated
          relative $K$-function \eqref{eq:relK} as a function of $r_1$ for
            selected values of $r_2$.}\label{fig:aligned}
\end{figure}

\section{Data example}\label{sec:dataexample}

In this section we apply our $K$-function to a 3D fiber pattern, shown in Figure \ref{fig:concrete}, of
steel fibers in ultra high performance fiber-reinforced concrete. The
fiber pattern was extracted from a series of micro computed tomography
images kindly
provided by the authors of \cite{maryamh:etal:21}. The objective in
\cite{maryamh:etal:21} is to study the relationship between production
parameters of the fiber-reinforced concrete, the spatial distribution
and orientation of the steel fibers, and mechanical properties of the concrete (results of bending tests
applied to concrete specimens). More specifically, the data set
considered here was obtained by imaging a $160 \times 40\times 40$mm
block of concrete with a fiber volume percentage of 1\% and with
fibers of length 12.5mm and of diameter 0.3mm. Each image voxel was of
sidelength 0.0906mm. Ideally, for maximal bending strength, the
fibers should be aligned with the long axis of the block. Due to
sedimentation, spatial inhomogeneity of the fibers is possible
although the degree of inhomogeneity depends on the viscosity of the
concrete used for casting the concrete block, see
\cite{maryamh:etal:21} who considered several blocks produced with
varying production parameters.  \cite{maryamh:etal:21} studied the
spatial variation in fiber density using area fraction profiles
obtained from the microscope images. For the directions of fibers they
obtained partial derivatives of the gray scale images and considered
eigenvectors of the resulting orientation matrices. For these
approaches it is not necessary to identify individual fibers. To take
the analysis a step further we identified the individual fibers in
order to apply our new $K_w$-function. 
\begin{figure}
	\centering
	\includegraphics[width=\textwidth]{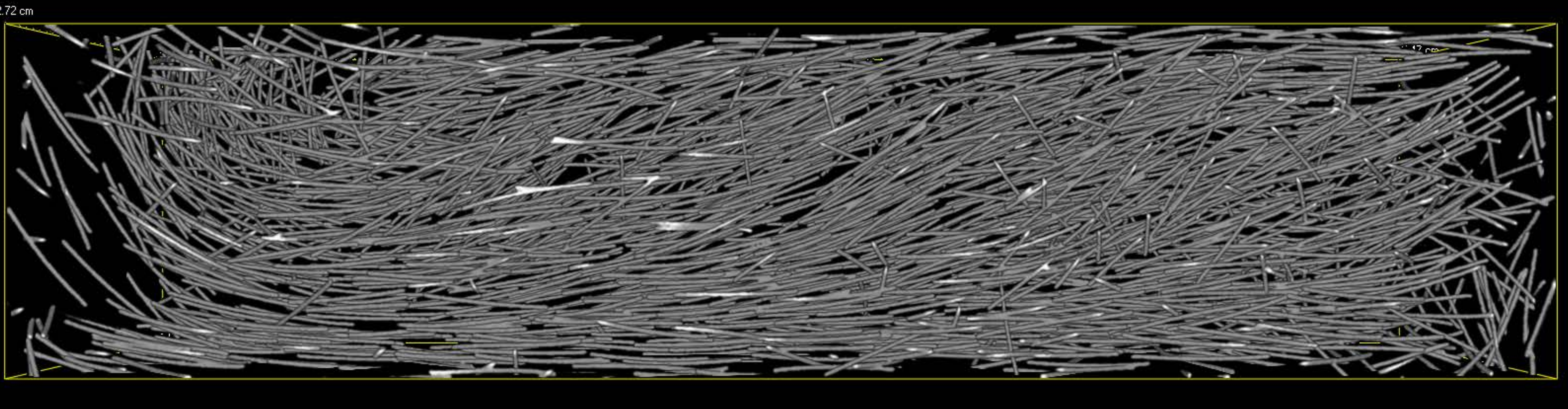}
	\caption{3D image of a subblock of the fiber-reinforced concrete dataset.}\label{fig:concrete}
\end{figure}

\subsection{Fiber segmentation}

Since pairs of fibers are often quite close, an initial segmentation
mask failed to separate all fibers. To correct this, a manual
separation was performed using the ImageJ
\citep{schneider:rasband:eliceiri:12} software by removing pixels that
connected fibers. Each fiber was then represented by a cloud of pixel
center points and the fiber cloud points were assigned fiber specific labels
using a connected component algorithm implemented in the Python
package SciPy \citep{2020SciPy-NMeth}. Some clouds contained very
small numbers of fiber points and were discarded as artefacts of noise
(in total 8\% of the fiber points were discarded). To represent each
fiber as a curve in space, we next obtained a least squares fit of a
vector function $p(t) = (p_x(t), p_y(t), p_z(t))^\T$ to each fiber
point cloud. Here the functions $p_x,p_y,p_z$ were chosen to be third
order polynomials. As mentioned in Section~\ref{sec:Kfibercase}, in
practice we approximate integration along fibers with Monte Carlo
integration. Specifically, we randomly generated a point process of
equispaced points on each fiber. The spacing was 0.906mm corresponding
to 10 image voxel side lenghts. 
Figure~\ref{fig:fiber1} shows
an example of a fiber point cloud, the fitted curve, and the Monte
Carlo points on the fiber curve. Tangent directions were computed at each Monte Carlo point by normalizing the derivative of $p(t)$. Since fibers
	are unoriented we, following Section~\ref{sec:fiberprocess}, represented tangents by points on
	be the hemi-sphere  $\Sg$ with north pole at $(0,1,0)$.
Due to strong edge effects in each
end of the block, we follow \cite{maryamh:etal:21} and discard data
from the first and last (referring to the long axis) 2 cm of the
block. 
\begin{figure}
\centering
 \includegraphics[width=0.6\textwidth]{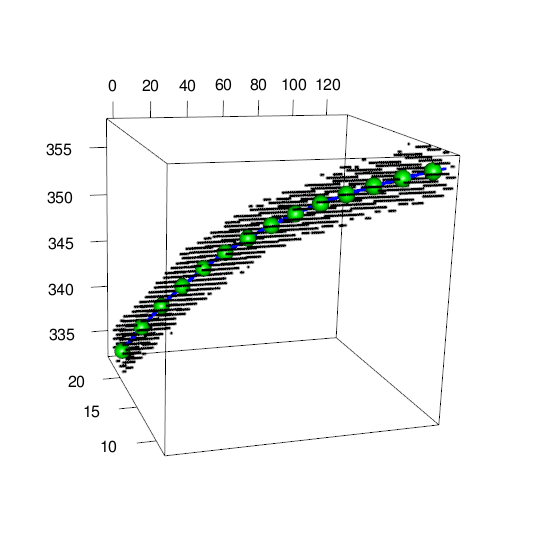}
  \caption{Example of a fiber point set, the fitted fiber curve
    (blue) and Monte Carlo sample points (green).}\label{fig:fiber1}
\end{figure}

\subsection{Estimation of the density function}\label{sec:density_est}

The density function $\rho$ was fitted following the approach in
Section~\ref{sec:rhoestimation} using the model \eqref{eq:rhomodel}
with a linear model for the space
component $\rho(u;\beta)=(1,u^\T)^\T \bt$, $u \in \R^3$, $\bt \in \R^4$, as suggested by \ref{sec:reflectioninvariant}. Following
Section~\ref{sec:fibercaserho}, we obtained the estimate 
$\hat{\beta}=(1.82, 0.0027,-0.0061, -0.045)^\T$.
 Relative to the intercept,
the slopes are of moderate magnitude meaning that the volume density of the
fibers is only varying moderately over space (e.g.\ an estimated
increase in intensity of 0.32 (120 times 0.0027) across the block in the $x$-direction). 

  For the probability density $\eta$ for tangent directions on
    the hemi-sphere $\Sg$, we converted each unit tangent vector
$\tau=(\tau_1,\tau_2,\tau_3)$ into
cylindrical coordinates $(h,\phi)=(\tau_1,\arctan(\tau_3/\tau_2)) \in
[-1,1] \times [-\pi/2,\pi/2]$
(with $\arctan(0/0)=0$ and $\arctan(\pm \infty) = \pm \pi/2$.)
An exploratory analysis of the
height and angle cylindrical coordinates suggested to treat these as
outcomes of independent random variables and assume a height distribution symmetric around zero. We hence modelled the joint
density of heights and angles as a product of densities that were
non-parametrically estimated by histograms, see Figure~\ref{fig:fiber2}. Finally, the Jacobian of the mapping from
the unit hemi-sphere to cylindrical coordinates is one, so the density
$\eta$ is simply estimated by the estimated joint density of height
and angle. 
\begin{figure}
\centering
\begin{tabular}{cc}
\includegraphics[width=0.4\textwidth]{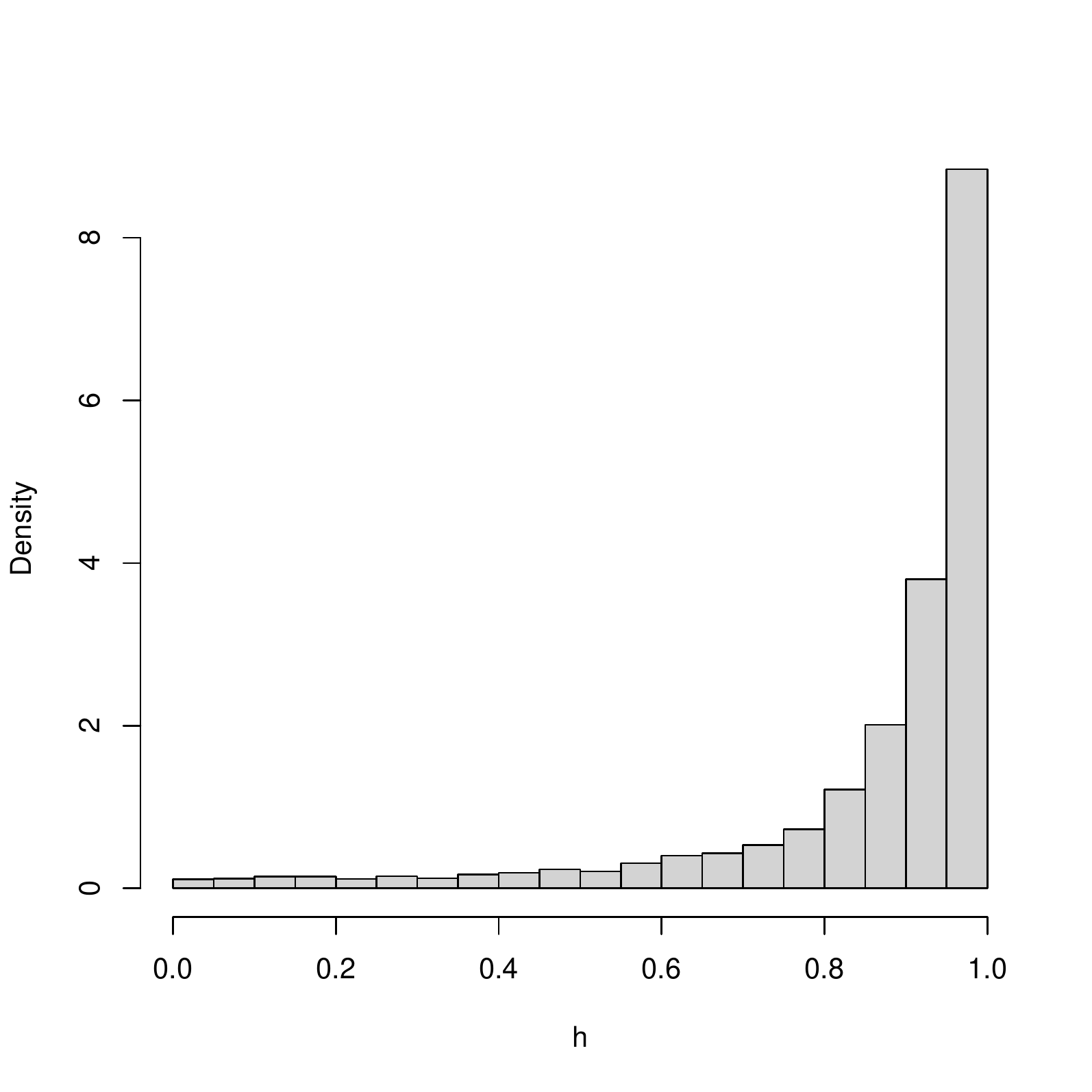} &
\includegraphics[width=0.4\textwidth]{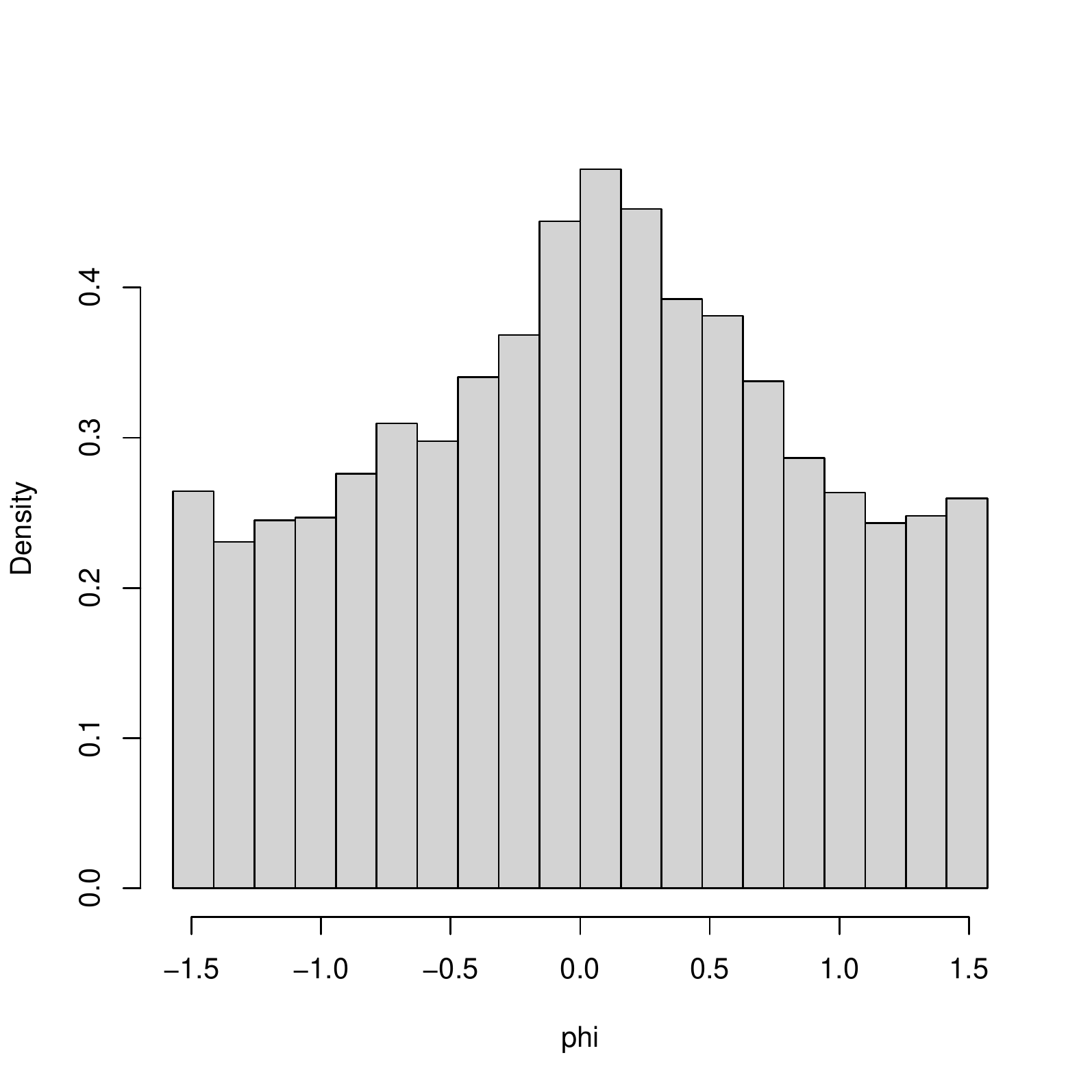} 

\end{tabular}
  \caption{Left: histogram of absolute values of $x$-coordinates of
    unit tangent vectors. Right: angle-components (in $(y,z)$-plane) of cylindrical coordinates of
  unit tangent vectors.  }\label{fig:fiber2}
\end{figure}

The histograms show that the distribution of angles in the
$(y,z)$ plane is roughly symmetric around 0  with  moderate deviation from the uniform distribution while the distribution of
heights assigns most mass at heights ($x$-coordinate of unit
tangent vectors) close to one. This means that the
fiber tangents are in general aligned with the $x$ direction as
desired for high strength of the concrete.

\subsection{The $K$-function for the concrete data}
We estimated the inhomogeneous $K_w$-function by 
\eqref{eq:Kestimatefiber}--\eqref{eq:montecarlo} using the intensity
estimate described in Section~\ref{sec:density_est} as weights. For
comparison, we also estimated the $K_w$-function assuming a constant
estimated intensity. 
The results are shown in Figure \ref{fig:Kdata}.  
\begin{figure}
	\centering
  \includegraphics[width=\textwidth]{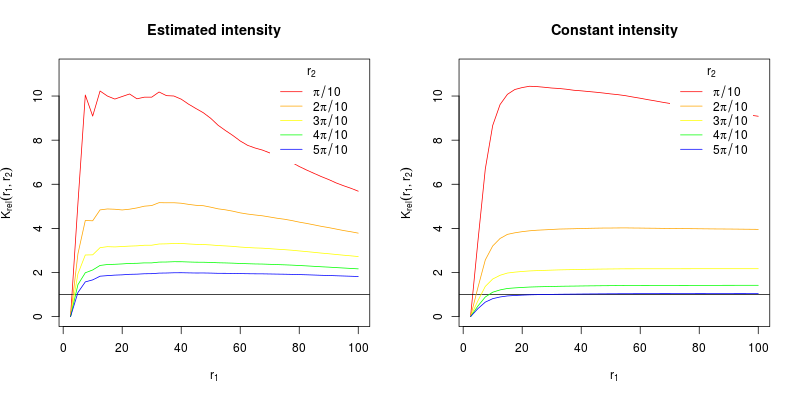}
	\caption{Relative $K$-function \eqref{eq:relK} for the concrete data set as a function
          of $r_1$ for five different values of $r_2$. Left: 
            with non-constant estimated intensity. Right: with constant estimated intensity.} \label{fig:Kdata}
\end{figure}
For the smallest angle $r_2=\pi/10$, the $K_w$-function is up to 10
times that of the null model. As $r_2$ increases, the $K_w$-function
comes closer to the null  model. This suggests that fiber directions
tend to be locally more similar than under the null model. If
we do not correct for inhomogeneity, we see the same effect. However,
in this case, we cannot tell whether it is due to first-order
inhomogeneity or the local alignment of fibers.  For the
$K_w$-function taking into account inhomogeneity, the deviation from the null model seems to decrease with $r_1$, suggesting that fiber alignment is strongest over small distances. This effect is not nearly as pronounced when constant intensity is used, possibly because the preferred fiber direction is present at all distances.

   To investigate whether deviations from the null model could be due
   to chance, we simulated 39 fiber patterns from the null model by
   generating fiber midpoints from a Poisson point process with
   spatial intensity $\rho_Y$ estimated from the original
   dataset. To estimate $\rho_Y$, we assumed a linear model as in
   \ref{sec:reflectioninvariant} and used
   \eqref{eq:reflection_density} to obtain $\rho_Y(u)$ as
   $\eta(s)^{-1}l^{-1}\rho(u,s;\beta)$ where for
   $\rho$ and $\eta$ we used the estimates from Section~\ref{sec:density_est} and for $l$ the average fiber length.  For each midpoint, a fiber was sampled with replacement from the fibers in the original dataset. To compute the $K_w$-function, we reestimated $\rho$ by the same procedure as used for the actual dataset. The pointwise minimum and maximum values provide pointwise 95\% envelopes for the relative $K$-function, see Figure  \ref{fig:envelopes}. The relative $K$-function estimated from data is clearly outside the envelope for all the displayed values of $r_2=\pi/10, 3\pi/10, \pi/2$, suggesting a deviation from the null model.

\begin{figure}
	\centering
	\includegraphics[width=\textwidth]{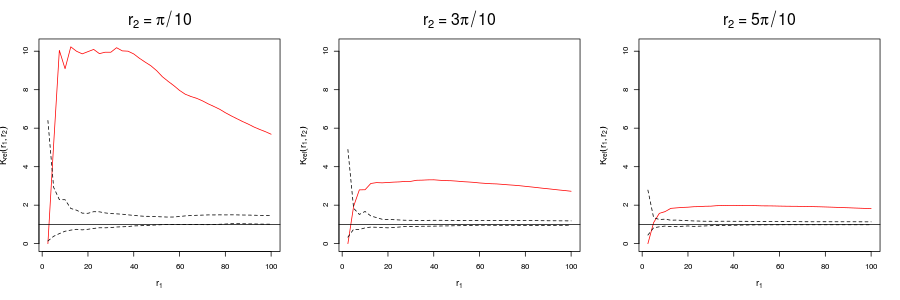}
	\caption{The relative $K$-function \eqref{eq:relK} (red curves) for three different values of $r_2$ with pointwise 95\% envelopes (dashed curves).}\label{fig:envelopes}
\end{figure}

\section{Discussion}

In this paper we developed a new $K_w$-function  for studying random
measures of the germ-grain type allowing to take geometrical features of grains into account. In particular, we considered the
case of fiber patterns with associated fiber tangents. The idea of
using geometrical information in second-order analyses of spatial patterns
is not new \citep{schwandtke:88} but to the best of our knowledge, this
idea has not before materialized in analyses of simulated and real data. In
contrast to \cite{schwandtke:88}, we moreover allow for an
inhomogeneous density of the random measure. As demonstrated by our
simulation studies and data example, taking into account inhomogeneity
(be it spatially or with respect to the distribution of the
geometrical features)
may have profound implications for the interpretation of estimated
$K_w$-functions. In contrast to \cite{gallego:ibanez:simo:16}, our
assumption of distinguishable components of the random measures allow
us to precisely characterize our $K_w$-function under the null model of
independent germ-grains. This is useful as a benchmark for interpreting an
estimated $K_w$-function.

Although we use a germ-grain/marked point process framework for our
mathematical derivations, we emphasize that only knowledge of the
grains/marks are required for our methodology. If the germs/points
were known one might consider alternatives such as the mark-weighted
$K$-function \citep{penttinen:stoyan:henttonen:92} that could be
generalized to the case with an inhomogeneous density for the
points. However, in addition to requiring observation of the points
this would also require full observation of the marks which can be
problematic due to edge effects.

Regarding estimation of first-order properties of inhomogeneous
  random measures, we constructed new estimating functions for fitting
  parametric models for the first order density $\rho$. This
  methodology is in fact applicable to general inhomogeneous random
  measures not necessarily of the germ-grain type considered in this paper.

A practical issue in relation to applying our $K_w$-function  is that
identifying the individual grains (e.g.\ fibers) from image data may not be
straightforward. For our data, this required a substantual
amount of manual work and there is certainly scope for research in improved image processing algorithms.
A theoretical topic for future research is to widen the catalogue of
specific models for which the reweighted second-order measure
$\af_w^{(2)}$ is known to be stationary.\\[\bsl]
\noindent {\bf Acknowledgments} Hans J.\ T.\ Stephensen was supported by QIM - Center for Quantification of Imaging Data from MAX IV.
The concrete fiber data and Figure~\ref{fig:concrete} was kindly provided by Claudia Redenbach, Technische Universit{\"a}t Kaiserslautern. The concrete fiber data is a result of joint work by Kasem Maryamh (sample), Technische Universit{\"a}t Kaiserslautern,  Franz Schreiber (imaging), Fraunhofer ITWM, and Konstantin Hauch (binarization), Technische Universit{\"a}t Kaiserslautern.
\bibliographystyle{rss}
\bibliography{fibres}

\appendix

\section{A simple fiber model}\label{sec:simplefibermodel}

A  random line segment $\Gamma $ generates the random measure 
\[ m(A \times S) = \int_{\Gamma} \ind [x \in A, \tau(x) \in S] \lambda(\dd
  x), A \subseteq \R^d, S \subseteq \Sg^{d-1}. \]
If $\Gm$ is centered at the origin, of random length
$L$, and of randomly  distributed orientation $\Theta \in
\Sg^{d-1}$, then 
\[ m(A \times S) = \int_{-L/2}^{L/2}  \ind [ r \Theta \in A, \Theta \in S] \dd
  r.\]

Assuming moreover that $L$ and $\Ta$ are independent and
  $\Theta$ has a density $\eta$ with respect to $\nu$, we obtain
\[ \EE m(A \times S)=  \int_{\Sg^{d-1}}
  \E \int_{-L/2}^{L/2} \eta(\ta) \ind [ r \theta \in A, \theta \in S] \dd r  \nu (\dd \ta).
  \]
  
Consider a fiber process where, given the underlying point process
$Y$, the marks are line segments identically distributed as $\Gamma$ translated to have midpoints in $Y$.
Then,
\begin{align*}
\mu(A \times S) = & \int_{\R^d} \rho_Y(u) \EE[ m((A-u) \times S)] \dd u
  \\
= & \int_{\Sg^{d-1}} \ind [ \theta \in S]
    \int_{\R^d} \rho_Y(u)
\E  \int_{-L/2}^{L/2}  \eta(\ta) \ind [ r \theta +u \in A]
     \dd r \dd u \nu( \dd \ta) \\
= &  \int_{\Sg^{d-1}} \int_{\R^d} \ind [ z \in A, \theta \in S]
 \E \int_{-L/2}^{L/2}  \eta(\ta) \rho_Y(z-r \theta) \dd r \dd z \nu( \dd \ta).
\end{align*}
Hence $\mu$ has density
\[ \rho(z,s)= \eta(s)\E \int_{-L/2}^{L/2}   \rho_Y(z-r
  s) \dd r .\]
  
Suppose $\rho_Y$ is given in terms of a linear model $\rho_Y(u)= \bt_0 + \bt^\T u$.
Then 
\[ \rho(z,s) = \eta(s)\E \int_{-L/2}^{L/2}  [\bt_0 + \bt^\T (z-r s)] \dd
  r = \eta(s) l (\bt_0+ \bt^\T z),  \]
  where $l=\E L$.

\section{An alternative choice of weights }\label{sec:w2}
For the weights $w(u,x,s)=\xi_u(x,s)\rho_Y(u)$ suggested in Remark
\ref{rem:w2} assume that $\xi_u(\cdot ,s)$ has
a bounded support $u+C$ where $C$  does not depend on $u$ or
$s$. Then $\mu_{w}(A \times S)= |A| \nu(S) |C|$.
Assuming that $\xi_{u,u'}$ exists, define
\[ g_{u,u'}(x,s,y,t) =
  \frac{\xi_{u,u'}(x,s,y,t)}{\xi_u(x,s)\xi_{u'}(y,t)}\ind[x-u,y-u'\in C].\]
By \eqref{eq:campbell_alpha} and \eqref{eq:xiuu'},
\begin{align*}& \EE  \sum_{(u,m),(u',m') \in X}^{\neq} \int_{(\R^d
                \times \Sg)^2}f(x,s,y,t) m(\dd x \times \dd
                s)m'(\dd y \times \dd t) \nonumber \\=& 
  \int_{(\R^d \times \Sg)^2}f(x,s,y,t) \int_{\R^d \times \R^d} \rho_Y(u)\rho_Y(u') g_Y(u,u') 
                                            \xi_{ u ,
                                            u'}((x,s),(y,t)) \nonumber \\ 
                        &\dd u \dd u' \dd x \dd
                                            y \nu(\dd s) \nu(\dd t).
\end{align*}
We thus have the following expression for $K_w$:
\begin{align*}
  &K_{w}(r_1,r_2;W) \\
   =
    &\frac{1}{|W|}\int_{\R^d \times \R^d }\int_{(C+u)\times (C+u')\times \Sg^2 }
  \ind [x \in W] \ind[ \|x  -y\|\leq r_1,
  d_{\Sg} (s,t) \le r_2] \\ & \cdot  g_{u,u'}(x,s,y,t)g_Y(u,u')
  \dd x \dd y
  \nu(\dd s) \nu(
  \dd t) \dd u \dd u'.\nonumber 
  \end{align*}
  
Stationarity of $\alpha_w^{(2)}$ is obtained if for all $h\in \R^d$
\begin{equation}\label{eq:bothd} g_Y(u,u')=g_Y(u+h,u'+h)  \,\, \text{and} \,\,
  g_{u,u'}(x,s,y,t)=g_{u+h,u'+h}(x+h,s,y+h,t). 
   \end{equation}
In this case, 
\begin{align*}
  K_{w}(r_1,r_2;W)   = &\int_{  \R^d\times (C\times \Sg )^2 }
  \ind [\|x-y-l\| \le r_1,d_{\Sg}(s,t)\le r_2 ]\\
  &\cdot g_{0,l}(x,s,y+l,t)g_Y(0,l)                                                      \dd x \dd y
                                                      \nu(\dd s) \nu(
                                                      \dd t) \dd l.\nonumber 
   \end{align*}

The first condition in \eqref{eq:bothd} requires that $Y$ is
second-order intensity re\-weigh\-ted stationary which is
a fairly weak requirement, see \cite{baddeley:moeller:waagepetersen:00}. 
The second requirement is for instance satisfied if
$\xi_u(x,s)=\xi(x-u,s)$ for a common $\xi$ not depending on $u$ and
if, similarly, $\E_{u,u'} m(\cdot)m'(\cdot)$ is translation
  invariant, $\E_{u,u'} m(\cdot)m'(\cdot)=\E_{u+h,u'+h} m(\cdot+h)m'(\cdot+h)$, $h \in \R^d$.
This is for instance the case for the model with dependent fibers used for simulation in Section \ref{sec:simulations}. 

Under the null model, $g_Y=1$ and $\xi_{u,u'}=\xi_u \xi_{u'}$.  Then $g_{0,l}(x,s,y,t)g_Y(0,l)= \ind [ x , y \in C]$ and 
\[ K_{w}(r_1,r_2)= K_{0}(r_1,r_2) |C|^2, 
\]
where $K_0$ is as in \eqref{eq:K_null}.
In practice $|C|$ is not known. Given an estimate $\hat K_{w}$ we might
consider the ratio $\hat K_{w}(r_1,r_2)/K_0(r_1,r_2)$ which should be close to
a constant over $r_1$ and $r_2$ under the null model.

For more general models, if $g_{u,u'}(x,s,y,t)\leq (\geq) 1$ and $g_Y(u,u')\leq (\geq) 1$ for all $u,u'\in \R^d$ and all $(x,s,y,t)\in (C\times\R^d)^2$, then $K_{w}(r_1,r_2)\leq (\geq ) K_{0}(r_1,r_2)|C|^2$ for all $r_1,r_2$. 


\section{A model with reflection invariant mean mark density}\label{sec:reflectioninvariant}

Suppose that $\xi_u(x,s)=\xi(x-u,s)$ where $\xi$ satisfies $\xi(x,s)=\xi(-x,s)$. This is
for example satisfied is $\xi(\cdot,s)$ is a density with elliptical contours centered
at the origin.
With 
\[ \rho_Y(u) = \bt_0 + \bt^\T u \]
and defining 
\[ \tilde \xi (s)= \int_{\R^{d}}  \xi(x,s) \dd x, \]
we obtain
\begin{align*} \rho(z,s) = & \frac{1}{2}\int_{\R^d} (  \xi(x,s) +  \xi
  (-x,s))(\bt_0 + \bt^\T [z-x] )  \dd x \\ = & (\bt_0 + \bt^\T z)
  \int_{\R^{d}}  \xi(x,s) \dd x + \frac{1}{2} \int_{\R^d}
   \xi(x,s)  \bt^\T [-x + x] \dd x \\ = & (\bt_0 + \bt^\T z) \tilde \xi(s). 
\end{align*}
Defining
\[ l= \int_{\Sg^{d-1}} \tilde \xi(s) \nu( \dd s) \]
and the probability density
\[ \eta(s)= \tilde \xi(s) / l, \]
we obtain 
\begin{equation}\label{eq:reflection_density}
 \rho(z,s)= \eta(s)l \rho_Y(z) .
 \end{equation}

Suppose now that $\rho_Y(z)$ is non-linear in $z$, that $\xi(\cdot,\cdot)$ has
bounded support $C$ with respect to the
first argument, and that $\rho_Y(\cdot)$ is well approximated by its
tangent plane on $C+z$. The
convolution
\[ \rho(z,s)= \int_{\R^d} \xi(z-u,s)\rho_Y(u) \dd u \]
only involves $\rho_Y(u)$ for $u \in C+z$. Hence we obtain
\[ \rho(z,s) \approx \eta(s)l \rho_Y(z) \]
following the arguments leading to \eqref{eq:reflection_density}.

\section{A model with dependent fibers}\label{sec:simplefibermodelinteract}

We now describe the model used for simulation of dependent fibers in Section \ref{sec:simulations}. We again choose $Y$ as a Poisson process and generate independent stationary zero-mean
Gaussian fields $X_1,\ldots,X_d$ such that $\Var X_i(u)=\Var X_j(u)$, $u
\in \R^d$, $i,j=1,\ldots,d$. For a location $u$, we further let
$X(u)=(X_1(u),\ldots,X_d(u))$, $L_u=\|X(u)\|$ and $T_u=X(u)/L_u$. Then
$T_u$ has uniform density $\eta$ on $\Sg^{d-1}$ and is independent
 of
$L_u$. Each point $u\in Y$ is marked by the line segment $\Gm_u$  from
$-X(u)$ to $X(u)$ translated by $u$. The fiber measure 
$m$ associated to $\Gamma_u$ is 
\[ m(A \times S) = \int_{\Gm_u}  \ind[ z \in A,T_u \in S ] \lambda (\dd
  z) = L_u  \int_{-1}^{1} \ind [ u+ t T_u L_u \in A, T_u \in S] \dd t. \]
Since the distribution of $\Gamma_u $ is independent of $u$, we get from \ref{sec:simplefibermodel} that
 $\mu$ has density 
\[ \rho(z,s)=\eta(s) \int_{-L_u}^{L_u} \EE[
\rho_Y(z-r s ) ] \dd r\]
and if $\rho_Y$ is linear 
$ \rho_Y(u) = \bt_0 +\beta^\T u ,$
then
\begin{align*}
\rho(z,s)= \eta(s) l (\bt_0 +\bt^\T z)
\end{align*}
with $l=2  \EE L_u$.

\end{document}